\newtheorem{Them}{Theorem}[section]
\newtheorem{Prop}{Proposition}[section]
\newtheorem{Remark}{Remark}[section]
\newtheorem{CR}{Corollary}[section]
\newtheorem{EX}{Example}[section]
\begin{document}
\makeatletter
\def\@setauthors{%
\begingroup
\def\thanks{\protect\thanks@warning}%
\trivlist \centering\footnotesize \@topsep30\p@\relax
\advance\@topsep by -\baselineskip
\item\relax
\author@andify\authors
\def\\{\protect\linebreak}%
{\authors}%
\ifx\@empty\contribs \else ,\penalty-3 \space \@setcontribs
\@closetoccontribs \fi
\endtrivlist
\endgroup }
\makeatother
 \baselineskip 17pt
\title[{{\bf stock loan and capped Stock loan}}]
 {{  Variational
 inequality method in stock loans }}
\author[{\bf Zongxia Liang and Weiming Wu } ]
{ Zongxia Liang \\ Department of Mathematical Sciences, Tsinghua
University, \\ Beijing 100084, China \quad  Email:
zliang@math.tsinghua.edu.cn\\
Weiming Wu\\ Department of Mathematical Sciences, Tsinghua
University,\\ Beijing 100084, China  \quad  Email:
wuweiming03@gmail.com } \maketitle
\begin{abstract}
 In this paper we first introduce two new financial products:
  stock loan and capped stock loan. Then we  develop a pure variational
 inequality method to establish explicitly the values of these stock loans.
 Finally, we work out  ranges of fair values of parameters associated with
 the loans.
 \vskip 10 pt
 \noindent { MSC}(2000): Primary 91B02, 91B70, 91B24; Secondary  60H10,91B28.
 \vskip 5pt
 \noindent
{ Keywords:} Capped stock loan, variational
 inequality method,  perpetual American option,  generalized  It\^{o} formula
 , Black-Scholes model.
 \end{abstract}
 \vskip 10pt \noindent
\setcounter{equation}{0}
\section{{\small {\bf Introduction}}}
 \vskip 10pt\noindent
Stock loan is a simple economy where a client(borrower), who owns
one share of a stock, borrows a loan  of amount $q$ from a bank
(lender) with one share of stock as collateral. The bank charges
amount $c(0\leq c\leq q)$ from the client for the service.  The
client may regain the stock by repaying  principal and interest
(that is, $qe^{\gamma t}$, here $\gamma$ is continuously compounding
{\sl loan} interest rate ) to the bank, or surrender the stock
instead of repaying the loan at any time. It is a currently popular
financial product. It can create liquidity while overcoming the
barrier of large block sales, such as triggering tax events or
control restrictions on sales of stocks. It also can  serve as a
hedge against a market down: if the stock price goes down, the
client may forfeit the loan at  initial time; if the stock price
goes up, the client keeps all the upside by
repaying the principal
and interest. Therefore, the stock loan has unlimited liability. As
a result, stock loan transforms  risk into the bank.
 To reduce the risk, the bank introduces cap $L$ feature
in the stock loan because the cap adds a further incentive to
exercise early. Such a loan is called a {\sl capped stock loan}
throughout this paper. Since the capped stock loan has limited
liability and lower risk, it therefore will be an attractive
instrument to market for an issuer, or  to hold short for an
investor like American call option with cap in financial market(cf.
Broadie and Detemple \cite{BRDE}(1995)). \vskip 10pt \noindent
 The wide acceptance and  popular feature of the capped
 (uncapped)
stock loan in the marketplace, however, greatly depend on how to
make successfully these kinds of financial  products.  More
precisely, how to work out right values of the parameters $(q,
\gamma, c, L)$ is  a natural and key problem in negotiation between
the client and the bank at initial time. Unfortunately, to the
authors' best knowledge, it seems that few results on the topic have
been reported in existing literature. The main goal of the present
paper is to develop a pure variational inequality method to solve
this kind of  problems. \vskip 10pt \noindent
 We explain   major difficulty and main idea of
solving  the problem  as follows. We  formulate the capped
(uncapped) stock loan as a  perpetual American option with negative
interest rate. We denote by $f(x)$ the initial value  of this
option. The problem  can be reduced to calculating  the function
$f(x)$ for determining the ranges of fair values of the  parameters
$(q, \gamma, c, L)$. According  to the conventional variational
inequality method, the $f(x)$ must satisfy a variational inequality,
we calculate  $f(x)$ by initial condition $f(0)=0$ and smooth-fit
principle. However, because of negative interest rate the initial
condition $f(0)=0$ does not work, i.e., the conventional variational
inequality method can not solve this option with negative interest
rate. Moreover, the presence of the cap also complicates the
valuation procedure(if we focus on studying the capped stock loan).
So we need to develop the variational inequality method to deal with
the case of negative interest rate. As payoff process of the capped
(uncapped) stock loan is a Markov process, the optimal stopping
 time must be a hitting time( Remark \ref{R31} below) from which we guess that
 $f'(0)=0$. In addition, we observe that the condition do work  in the case
 of negative interest rate but it has no use to dealing  with the case of
 non-negative interest rate. Based on the  conjecture and observation, we first
establish explicitly the value $f(x)$ of the capped stock loan by a
new pure variational inequality method. Then we use  the expression
of $f(x)$ to work out the ranges of fair values of parameters
associated with
 the  loan.  Finally, as a special case of our main result, we also get the
same conclusion on uncapped stock loan as in Xia and Zhou
\cite{stock} proved by a pure probability approach.
\vskip10pt\noindent
 The paper is organized as follows: In Section
2, we formulate a mathematical model of capped stock loan and it is
considered as a perpetual American option with a possibly negative
interest rate. In Section 3, we extend standard variational
inequality method to  the case of negative interest rate and
calculate the initial value and  the value process of the
capped(uncapped) stock loan. In Section 4, we work out the ranges of
fair values of parameters associated with the loan based on
 the results in previous sections. In Section 5, we
present two examples to explain how the cap impacts on the initial
value of uncapped stock loan.
\vskip 10pt\noindent
\setcounter{equation}{0}
\section{{\small {\bf Mathematical model}}}
\vskip 10pt \noindent In this section the standard Black-Scholes
model in a continuous financial market consists of two assets: a
risky asset stock $S$ and a risk-less bond $B$. The uncertainty  is
described by a standard Brownian motion $\{ \mathbf{W}_{t}, t\geq
0\}$ on a risk-neutral complete probability space $(\Omega, \mathcal
{F}, \{ \mathcal {F}_{t}\}_{t\geq 0}, P)$, where $\{ \mathcal
{F}_{t}\}_{t\geq 0}  $ is the filtration generated by $ \mathbf{W}
$, $\mathcal {F}_{0}=\sigma\{ \Omega, \emptyset\}$ and $ \mathcal
{F}=\sigma\{ \bigcup_{t\geq 0}\mathcal {F}_{t} \}$. The risk-less
bond $B$ evolves according to the following  dynamic system,
\begin{eqnarray*}
dB_{t}=rB_{t}dt,\   \  r>0,
\end{eqnarray*}
where $ r$ is  continuously compounding interest rate. The stock
price $S$ follows a geometric Brownian motion,
\begin{eqnarray}  \label{E2.1}
S_{t}=S_{0}e^{(r-\delta-\frac{\sigma^{2}}{2})t+\sigma\mathbf{W}_{t}},
\end{eqnarray}
where $S_{0}$ is  initial stock price, $\delta\geq0$ is  dividend
yield and $\sigma >0$ is  volatility. The discounted payoff process
of the capped stock loan is defined by
\begin{eqnarray*}
Y(t)=e^{-rt}(S_{t}\wedge Le^{\gamma t}-qe^{\gamma t})_{+}.
\end{eqnarray*}
Since $Y(t)\geq 0, \  a.s.$  and $Y(t)>0$ with a positive
probability, to avoid arbitrage,  throughout this paper we assume
that
\begin{eqnarray}
 S_{0}-q+c>0 .
\end{eqnarray}
According to theory of American contingent claim, the initial value
of this capped stock loan is
\begin{eqnarray}\label{reward2}
f(x)&=&\sup\limits_{\tau \in \mathcal {T}_{0}}{\bf E}\big
[e^{-r\tau}(S_{\tau}\wedge Le^{\gamma \tau}-qe^{\gamma
\tau})_{+}\big ]\nonumber\\
&=&\sup\limits_{\tau \in \mathcal {T}_{0}}{\bf E}\big [e^{-\tilde
{r}\tau}(\tilde{S}_{\tau}\wedge L-q)_{+}\big ],
\end{eqnarray}
where $\tilde{r}=r-\gamma$, $\tilde{S}_{t}=e^{-\gamma
t}S_{t},\tilde{S}_{0}=S_{0}=x$, $ L>S_{0}$ and $\mathcal {T}_{0}$
denotes all $\{ \mathcal {F}_{t}\}_{t\geq 0}$-stopping times.
 The value process of this capped stock loan is
 \begin{eqnarray}\label{C24}
V_{t}=\sup\limits_{\tau \in \mathcal {T}_{t}}{\bf E}\big[e^{-
{r}(\tau-t)}(S_{\tau}\wedge Le^{\gamma \tau}-qe^{\gamma
\tau})_{+}|\mathcal{F}_{t}\big],
\end{eqnarray}
i.e.,
\begin{eqnarray*}
e^{-rt}V_{t}=\sup\limits_{\tau \in \mathcal {T}_{t}}{\bf E}\big
[e^{-\tilde {r}\tau}(\tilde{S}_{\tau}\wedge
L-q)_{+}|\mathcal{F}_{t}\big ],
\end{eqnarray*}
where $\mathcal {T}_{t}$ denotes all $\{ \mathcal {F}_{t}\}_{t\geq
0}$-stopping times $\tau $ with $\tau \geq t$ a.s.. Since the fair
values of $q, \gamma,c, L$ should be such that $f(S_0)=S_0-q+c$, the
range of the fair values of the parameters $( q, \gamma, c, L)$
reduce to calculating the $f(S_0)$. Because $\tilde{r}=r-\gamma\leq
0$, the problem is essentially to calculate the initial value  of
 a conventional perpetual American call option with a
possibly negative interest rate. We have the following.
\begin{Prop}\label{inequality}
$(x-q)_{+}\leq f(x)\leq x$ for  $x\geq 0$. $f(x)$ is continuous and
nondecreasing on $(0,\infty)$
\end{Prop}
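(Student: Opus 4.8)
The plan is to obtain both inequalities directly from well-chosen competitors in the supremum defining $f$, and then to upgrade the bound $f(x)\le x$ into a Lipschitz estimate that yields monotonicity and continuity at once.

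For the upper bound I would dominate the reward pathwise. Since $\tilde{S}_{t}\ge 0$ we have $(\tilde{S}_{t}\wedge L-q)_{+}\le \tilde{S}_{t}$, and because $\tilde{S}_{t}=xe^{(\tilde{r}-\delta-\sigma^{2}/2)t+\sigma \mathbf{W}_{t}}$ the factor $e^{-\tilde{r}\tau}$ cancels:
\begin{eqnarray*}
e^{-\tilde{r}\tau}(\tilde{S}_{\tau}\wedge L-q)_{+}\le e^{-\tilde{r}\tau}\tilde{S}_{\tau}=xe^{-\delta\tau}M_{\tau}\le xM_{\tau},\qquad M_{t}:=e^{\sigma \mathbf{W}_{t}-\sigma^{2}t/2}.
\end{eqnarray*}
Now $M$ is a nonnegative martingale, so optional stopping at the bounded times $\tau\wedge n$ gives $\mathbf{E}[M_{\tau\wedge n}]=1$, and since $M_{\tau\wedge n}\to M_{\tau}$ a.s. (using that $M_{t}\to 0$ a.s., which on $\{\tau=\infty\}$ fixes $M_{\tau}:=0$ and also makes the reward vanish there) Fatou yields $\mathbf{E}[M_{\tau}]\le 1$ for every $\tau\in\mathcal{T}_{0}$. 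Taking expectations and then the supremum over $\tau$ gives $f(x)\le x$; in particular $f$ is finite. For the lower bound I would take $\tau\equiv 0$, whose reward is $(x\wedge L-q)_{+}$, equal to $(x-q)_{+}$ on the range $x\le L$ relevant here ($x=S_{0}<L$ in the model), so $f(x)\ge(x-q)_{+}$.

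Monotonicity is a pathwise comparison. Writing $\tilde{S}_{t}=x\,\xi_{t}$ with $\xi$ the process started from $1$, for $x_{1}\le x_{2}$ one has $x_{1}\xi_{t}\le x_{2}\xi_{t}$, and $a\mapsto(a\wedge L-q)_{+}$ is nondecreasing, so the integrand for $x_{1}$ is dominated by that for $x_{2}$ pathwise; taking expectations and then suprema gives $f(x_{1})\le f(x_{2})$. For continuity, use additionally that $a\mapsto(a\wedge L-q)_{+}$ is $1$-Lipschitz, so for $x_{1}<x_{2}$
\begin{eqnarray*}
e^{-\tilde{r}\tau}\big[(x_{2}\xi_{\tau}\wedge L-q)_{+}-(x_{1}\xi_{\tau}\wedge L-q)_{+}\big]\le(x_{2}-x_{1})\,e^{-\tilde{r}\tau}\xi_{\tau}=(x_{2}-x_{1})\,e^{-\delta\tau}M_{\tau}\le(x_{2}-x_{1})\,M_{\tau},
\end{eqnarray*}
and the same bound $\mathbf{E}[M_{\tau}]\le 1$ gives $\mathbf{E}[e^{-\tilde{r}\tau}(x_{2}\xi_{\tau}\wedge L-q)_{+}]\le \mathbf{E}[e^{-\tilde{r}\tau}(x_{1}\xi_{\tau}\wedge L-q)_{+}]+(x_{2}-x_{1})$. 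Since the extra term does not depend on $\tau$, taking the supremum yields $f(x_{2})\le f(x_{1})+(x_{2}-x_{1})$. Together with monotonicity this gives $0\le f(x_{2})-f(x_{1})\le x_{2}-x_{1}$, so $f$ is $1$-Lipschitz on $[0,\infty)$ and in particular continuous and nondecreasing on $(0,\infty)$.

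The step I expect to require the most care is the bookkeeping for unbounded (or infinite) stopping times: justifying $\mathbf{E}[M_{\tau}]\le 1$ by optional sampling/Fatou for the exponential martingale and fixing the value of the reward on $\{\tau=\infty\}$. Everything else is elementary monotonicity and Lipschitz bookkeeping; note that the sign of $\tilde{r}$ never enters, since the discount factor is absorbed into $e^{-\delta\tau}M_{\tau}$.
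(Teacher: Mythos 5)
Your proof is correct and is essentially the argument the paper delegates to Xia--Zhou and the optional sampling theorem: the bound $\mathbf{E}[M_{\tau}]\le 1$ for the exponential martingale gives $f(x)\le x$, the choice $\tau\equiv 0$ gives the lower bound, and the pathwise monotonicity/Lipschitz comparison in the starting point gives the regularity; you simply write out the details the paper omits. You are also right to flag that $\tau\equiv 0$ only yields $(x\wedge L-q)_{+}\le f(x)$, which matches the bound the paper actually states inside its proof and agrees with the proposition's claim precisely on the relevant range $x=S_{0}<L$.
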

\begin{proof}
 Using the same way as in  Xia and Zhou \cite{stock}(2007), we
have $(x\wedge L-q)_{+}\leq f(x)\leq x\wedge L$ for  $x\geq0$. The
$f(x)$ is continuous and nondecreasing on $(0,\infty)$ can be proved
by  the optional sampling theorem.
\end{proof}
\vskip 10pt \noindent
 \setcounter{equation}{0}
\section{{\small {\bf Variational
 inequality method in stock loans}}}
\vskip 10pt \noindent
 In this section we develop a pure variational
 inequality method in the case of
negative interest rate to establish explicitly the  value $ f(x)$
  of the capped (uncapped) stock loan.  The key point is to replace
the initial condition $f(0)=0$ in the conventional case with a new
one $f'(0+)=0$. The detailed observation will be given in {\bf
Remark} \ref{R31} below. We find that the condition $f'(0+)=0$ holds
for the conventional perpetual American call option but no use in
determining free constants. Now we star with the following.
\begin{Prop} \label{Prop1}
Assume that $\delta>0$ and $\gamma-r+\delta\geq 0$ or $\delta=0$ and
$\gamma-r>\frac{\sigma^{2}}{2}$. Let $\lambda_1 $, $\lambda_2$ and $\mu  $ be defined by
\begin{eqnarray}\label{e35}
\left\{
\begin{array}{l l l}
\lambda_{1}=\frac{-\mu+\sqrt{\mu^{2}-2(\gamma-r)}} {\sigma},\
\lambda_{2}=\frac{-\mu-\sqrt{\mu^{2}-2(\gamma-r)}}{\sigma},\\
\mu=-(\frac{\sigma}{2}+\frac{\gamma-r+\delta}{\sigma}).
\end{array}
\right.
\end{eqnarray}
\vskip 10pt\noindent
(i)If $L\geq b^{}$,  $h(x)\in \mathcal
{C}\big([0,\infty)\big )\cap\mathcal
{C}^{1}\big((0,\infty)\setminus \{L\}\big)\cap \mathcal
{C}^{2}\big((0,\infty)\setminus \{b^{},L\}\big )$ and solves the
 variational inequality
\begin{eqnarray}\label{e31}
\left\{
\begin{array}{l l l l l}
\frac{1}{2}\sigma^{2}x^{2}h^{''}+(\tilde{r}-\delta)xh^{'}
-\tilde{r}h=0,  & x\in [0,b)\cup (L, \infty),\\
h(x)=x-q,\quad &x\in [b, L],\\
h(x)>(x-q)_{+},& x<b,\\
h(x)\leq x, & x \geq 0,\\
h'(0+)=0,\ h(b)=b-q,h^{'}(b)=1,
\end{array}
\right.
\end{eqnarray}
 then
\begin{eqnarray}\label{solution21}
h(x)=\left\{
\begin{array}{l l l}
(b^{}-q)(\frac{x}{b^{}})^{\lambda_{1}},&0< x\leq b^{},\\
x-q ,& b^{}<x<L,\\
(L-q)(\frac{x}{L})^{\lambda_{2}},&x\geq L.
\end{array}
\right.
\end{eqnarray}
Moreover,
\begin{eqnarray}\label{b}
b=\frac{q\lambda_1}{\lambda_1-1}.
\end{eqnarray}
 (ii) If $L< b^{}$,  $h(x)\in \mathcal
{C}\big([0,\infty)\big )\cap\mathcal
{C}^{1}\big((0,\infty)\setminus \{L\}\big)\cap \mathcal
{C}^{2}\big((0,\infty)\setminus \{L\}\big )$ and solves the
 variational inequality
\begin{eqnarray}\label{equivalent22}
\left\{
\begin{array}{l l l l}
\frac{1}{2}\sigma^{2}x^{2}h^{''}+(\tilde{r}-\delta)xh^{'}-\tilde{r}h=0,
& x \in [0,L) \cup (L, \infty ),\\
h(x)>(x-q)_{+}, &  x<L,\\
h(x)\leq x, & x \geq 0,\\
h'(0+)=0, h(L)=L-q,
\end{array}
\right.
\end{eqnarray}
then
\begin{eqnarray}\label{solution22}
h(x)=\left\{
\begin{array}{l l l}
(L-q)(\frac{x}{L})^{\lambda_{1}},&0<x<L,\\
L-q ,& x=L,\\
(L-q)(\frac{x}{L})^{\lambda_{2}},&x> L.
\end{array}
\right.
\end{eqnarray}
\end{Prop}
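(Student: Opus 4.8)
The plan is to read (\ref{e31}) and (\ref{equivalent22}) as region-by-region prescriptions and to show that the stated regularity together with the side conditions forces the constants of integration into the values giving (\ref{solution21}) and (\ref{solution22}); this is a uniqueness (``only if'') argument, so the listed inequalities $h(x)>(x-q)_+$ and $h(x)\le x$ are tools, not things to be checked. The first step is to observe that on any subinterval of $(0,\infty)$ avoiding $0$ the equation $\tfrac12\sigma^2x^2h''+(\tilde{r}-\delta)xh'-\tilde{r}h=0$ is an Euler equation whose indicial polynomial
\[
P(\lambda)=\tfrac{\sigma^2}{2}\lambda^2+\bigl(\tilde{r}-\delta-\tfrac{\sigma^2}{2}\bigr)\lambda-\tilde{r}
\]
has exactly the roots $\lambda_1,\lambda_2$ of (\ref{e35}) (direct substitution of $x^\lambda$, after rewriting $\mu$ through $\tilde{r}=r-\gamma$). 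Hence on each such interval $h(x)=Ax^{\lambda_1}+Bx^{\lambda_2}$ for interval-dependent constants. I would immediately record the two identities that drive everything: $P(1)=-\delta$ and $P(0)=-\tilde{r}$.

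The argument really turns on the sign structure of the roots, which I would establish next. Since $P$ opens upward and $P(1)=-\delta\le0$, the point $1$ lies weakly between the roots, so $\lambda_2\le1<\lambda_1$; in the regime $\delta=0,\ \gamma-r>\sigma^2/2$ one finds $\lambda_2=1$ and $\lambda_1=2(\gamma-r)/\sigma^2>1$, whereas for $\delta>0$ the ordering is strict, $\lambda_2<1<\lambda_1$, with $\lambda_2>0,=0,<0$ according as $\tilde{r}<0,=0,>0$ (read off from $P(0)=-\tilde{r}$). In particular the roots are distinct, so no logarithmic mode appears. The consequences I need are: $\lambda_1>1$ in all cases, and the mode $x^{\lambda_2}$ is eliminated at the origin by the given data. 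Indeed, continuity of $h$ on $[0,\infty)$ with $h\le x$ and $h\ge(x-q)_+\ge0$ forces $h(0)=0$; combined with $h'(0+)=0$ this kills the $x^{\lambda_2}$ term in the inner region in every subcase — finiteness alone does it when $\lambda_2<0$, the value $h(0)=0$ does it when $\lambda_2=0$, and it is precisely $h'(0+)=0$ that does it when $0<\lambda_2\le1$, the derivative of $x^{\lambda_2}$ being nonvanishing (indeed unbounded for $\lambda_2<1$) as $x\downarrow0$. This is the exact sense in which $h'(0+)=0$ replaces the conventional $h(0)=0$ and is operative only for negative effective rate.

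With the inner mode pinned down the two cases finish by matching. In part (i), on $[0,b)$ we have $h=Ax^{\lambda_1}$, and smooth fit $h(b)=b-q$, $h'(b)=1$ gives $A\lambda_1b^{\lambda_1-1}=1$ and $Ab^{\lambda_1}=b-q$; dividing yields $\lambda_1(b-q)/b=1$, i.e.\ $b=q\lambda_1/(\lambda_1-1)$ as in (\ref{b}) (well-defined and positive since $\lambda_1>1$ and $q>0$), whence $A=(b-q)/b^{\lambda_1}$ and $h(x)=(b-q)(x/b)^{\lambda_1}$. On $(L,\infty)$ the solution $Cx^{\lambda_1}+Dx^{\lambda_2}$ must have $C=0$: since $\lambda_1>1$, $C>0$ contradicts $h\le x$ and $C<0$ contradicts $h\ge0$ as $x\to\infty$; continuity at $L$ with $h(L)=L-q$ then gives $D=(L-q)/L^{\lambda_2}$, so $h(x)=(L-q)(x/L)^{\lambda_2}$, while $h=x-q$ on $[b,L]$ is imposed directly. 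Assembling the three pieces is (\ref{solution21}). Part (ii) repeats the matching with the middle region absent: $B=0$ on $[0,L)$ and continuity at $L$ give $h=(L-q)(x/L)^{\lambda_1}$, and $C=0$ on $(L,\infty)$ with continuity give $h=(L-q)(x/L)^{\lambda_2}$, which is (\ref{solution22}).

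I expect the genuine obstacle to be step two rather than the matching algebra: one must verify, uniformly across the two hypothesized parameter regimes, that $\lambda_1>1$ and $\lambda_2\le1$ and keep track of the sign of $\lambda_2$, because it is exactly that sign which decides whether finiteness, the boundary value $h(0)=0$, or the derivative condition $h'(0+)=0$ is the active constraint removing the $x^{\lambda_2}$ mode. Treating the borderline values $\tilde{r}=0$ (where $\lambda_2=0$) and $\delta=0$ (where $\lambda_2=1$) cleanly is the delicate point, and it is precisely the phenomenon highlighted in Remark \ref{R31}.
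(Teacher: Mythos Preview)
Your proposal is correct and follows essentially the same route as the paper: write the general Euler solution on each region, use the boundary data at $0$ to kill the $x^{\lambda_2}$ mode on the inner interval, use the growth constraint $h\le x$ to kill the $x^{\lambda_1}$ mode on $(L,\infty)$, and then match via smooth fit at $b$ and continuity at $L$. Your sign analysis of $\lambda_2$ (distinguishing $\tilde r\lessgtr 0$ and the borderline $\delta=0$) is more detailed than the paper's, which simply asserts $\lambda_1>1\ge\lambda_2>0$ under the stated hypotheses and invokes $h'(0+)=0$ and $h(x)<x$ directly; but the underlying argument is the same.
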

\begin{proof} We only deal with the part (i). The
 part (ii) can be treated similarly.
 Solving the first second-order differential
  equation of the(\ref{e31}), we get
\begin{eqnarray}\label{e39}
h(x)=\left\{
\begin{array}{l l l}
C_1x^{\lambda_1}+ C_2x^{\lambda_2}, & x \in [0, b),\\
x-q, & x\in [b, L],\\
C_3x^{\lambda_1}+ C_4x^{\lambda_2},& x \in (L, \infty ),
\end{array}
 \right.
\end{eqnarray}
where $C_i$, $i=1,2,3,4$, and $b$ are free constants  to be
determined, $ \lambda_1$ and $\lambda_2 $ are defined by
(\ref{e35}). \vskip 5pt \noindent Note that { \sl if $\delta >0$
then $\lambda_1 >1 > \lambda_2
>0 $, and if $\delta=0 $ and $\gamma-r> \frac{\sigma^2}{2}  $ then
$\lambda_1=\frac{2(\gamma-r  )}{\sigma^2}>1= \lambda_2  $}. Using
$h'(0+)=0$ and  $h(x)<x$, we have $C_2= C_3=0 $. Substituting
$C_2=C_3=0$ into (\ref{e39}) and applying $h(b)=b-q$, $ h'(b)=1 $ as
well as the principle of smooth fit
  in the resulting expression of $h(x)$ yields that
$$b-q=C_1b^{\lambda_1}, \quad   L-q= C_4 L^{\lambda_2},\quad  \lambda_1C_1
b^{\lambda_1-1}=1.$$ \vskip 5pt \noindent Solving this system for
$b$, $ C_1$ and $C_4$, we get $b=\frac{q\lambda_1}{\lambda_1-1}$, $
C_1=\frac{b-q}{b^{\lambda_1} }$ and  $
C_4=\frac{L-q}{L^{\lambda_2}}$.  Hence the equations
(\ref{solution21}) and (\ref{b}) follow. Since  $\lambda_1 >1 \geq
\lambda_2 >0 $ , the function $h(x)$ defined by (\ref{solution21})
obviously belongs to $\mathcal {C}\big([0,\infty)\big )\cap\mathcal
{C}^{1}\big((0,\infty)\setminus \{L\}\big)\cap \mathcal
{C}^{2}\big((0,\infty)\setminus \{b^{},L\}\big ).$ Thus we complete
the proof.
\end{proof}
\vskip 2pt \noindent
\begin{Remark}\label{R31}
From (\ref{e35}) we know that if $\widetilde{r}=r-\gamma>0 $( the
conventional case ) then $\lambda_1\geq 1>0$ and $\lambda_2 <0 $.
Consequently, we see from (\ref{e39}) that the function
$h(x)=C_2x^{\lambda_2}$ is rejected by the initial condition
$h(0)=0$, i.e., $C_2=0$. So by the same way as in proof of
Proposition \ref{Prop1} above, we can determine other free constants
$C_1$, $C_3$, $ C_4$ and $b$. \vskip 5pt \noindent If
$\widetilde{r}=r-\gamma\leq 0 $( the present case ) then $\lambda_1>
1\geq \lambda_2 >0  $. As opposed to the conventional case, we can
not deduce from $h(0)=0$ that $C_2=0$. This is the major difficulty
appearing in  variational
 inequality method in the case of stock loan. However, we note that
 in present case $h'(x)=\lambda_1C_1x^{\widetilde{\lambda_1}} +
 \lambda_2C_2x^{\widetilde{\lambda_2}}$ with
 $\widetilde{\lambda_1}=\lambda_1-1 > 0 $ and
 $\widetilde{\lambda_2}=\lambda_2-1 < 0 $. So if $h'(0+)=0$, the
 present case can be reduced to the conventional case. On the
 other hand, Markov property implies that the optimal stopping
 time must be a hitting time. In view  of the fact, we conjecture
that $h'(0+)=0$ is correct. The proof of the conjecture will be
given in Proposition \ref{Prop2} below. Comparing with  the
conventional case,  the $h'(0+)=0$ will play an important role in
developing  pure variational
 inequality method in the case of
stock loan.
\end{Remark}
\vskip 2pt \noindent
\begin{Prop}\label{Prop2}
Assume that  $\delta=0$, $\gamma-r>\frac{\sigma^{2}}{2}$ and  $
g(x)={\bf E}\big [e^{-\tilde{r}(\tau\wedge \varsigma )}
(\tilde{S}_{\tau\wedge \varsigma   }-q)_{+}I_{\{\varsigma <\infty\}}
\big ] $  for  $b\geq q$,  $ L\geq 0$, stoping times $\tau $ and
$\varsigma$.   Then $g'(0+)=0$.
\end{Prop}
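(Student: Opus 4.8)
The plan is to use the explicit form of $\tilde S$ to rescale the problem in $x$ and then to conclude by dominated convergence, with the exponential martingale of $\sigma\mathbf W$ supplying the dominating function. Since $\delta=0$, \eqref{E2.1} gives $\tilde S_t=e^{-\gamma t}S_t=x\,M_t$, where $M_t:=\exp\big((\tilde r-\tfrac{\sigma^2}{2})t+\sigma\mathbf W_t\big)$, $M_0=1$, and $M$ does not depend on $x$. First I would observe that $g(0)=0$: taking $x=0$ annihilates the payoff $(\tilde S_{\tau\wedge\varsigma}-q)_+$ because $q>0$. Hence $g'(0+)=\lim_{x\downarrow0}g(x)/x$, and it suffices to prove this limit is $0$. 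Pulling $\tilde S=xM$ out of the expectation, $g(x)/x={\bf E}\big[e^{-\tilde r(\tau\wedge\varsigma)}(M_{\tau\wedge\varsigma}-q/x)_+I_{\{\varsigma<\infty\}}\big]$ for $x>0$.

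The second step is the structural observation that $N_t:=e^{-\tilde r t}M_t=\exp\big(\sigma\mathbf W_t-\tfrac12\sigma^2t\big)$ is precisely the exponential martingale of $\sigma\mathbf W$; here the hypothesis $\delta=0$ is what produces the exact cancellation of the drift, while the negativity of $\tilde r$ does no harm, since it is absorbed into $N$. On $\{\varsigma<\infty\}$ one has $\tau\wedge\varsigma\le\varsigma<\infty$, so $M_{\tau\wedge\varsigma}$ and $e^{-\tilde r(\tau\wedge\varsigma)}$ are finite, and since $M>0$ the integrand in the formula for $g(x)/x$ is dominated, for every $x>0$, by $N_{\tau\wedge\varsigma}I_{\{\varsigma<\infty\}}$, a quantity not depending on $x$. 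Moreover, for each fixed $\omega\in\{\varsigma<\infty\}$ the number $M_{\tau\wedge\varsigma}(\omega)$ is finite, so $(M_{\tau\wedge\varsigma}(\omega)-q/x)_+\to0$ as $x\downarrow0$, while on $\{\varsigma=\infty\}$ the integrand is identically zero; thus the integrand tends to $0$ almost surely.

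The one genuine obstacle is the integrability of the dominating variable $N_{\tau\wedge\varsigma}I_{\{\varsigma<\infty\}}$, i.e.\ justifying that the martingale $N$ may be stopped at the possibly infinite, possibly unbounded time $\tau\wedge\varsigma$. I would handle this by applying the optional sampling theorem at the bounded times $(\tau\wedge\varsigma)\wedge n$, which gives ${\bf E}\big[N_{(\tau\wedge\varsigma)\wedge n}\big]=1$ for every $n$, and then letting $n\to\infty$: because $\sigma\mathbf W_t-\tfrac12\sigma^2t\to-\infty$ a.s.\ we have $N_t\to0$ a.s., hence $N_{(\tau\wedge\varsigma)\wedge n}\to N_{\tau\wedge\varsigma}I_{\{\tau\wedge\varsigma<\infty\}}$ a.s., and Fatou's lemma yields ${\bf E}\big[N_{\tau\wedge\varsigma}I_{\{\varsigma<\infty\}}\big]\le{\bf E}\big[N_{\tau\wedge\varsigma}I_{\{\tau\wedge\varsigma<\infty\}}\big]\le\liminf_{n\to\infty}{\bf E}\big[N_{(\tau\wedge\varsigma)\wedge n}\big]=1$. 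With this dominating function in hand, the dominated convergence theorem gives $g(x)/x\to0$, and since $g(0)=0$ this is precisely $g'(0+)=0$. As a variant avoiding the martingale, one could instead dominate by $\sup_{t\ge0}M_t$, which is a.s.\ finite with an explicit power tail since $\tilde r-\tfrac{\sigma^2}{2}<0$; but the martingale route is shorter.
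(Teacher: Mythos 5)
Your proof is correct and follows essentially the same route as the paper's: write the difference quotient $g(x)/x$, dominate the integrand by the stopped exponential martingale $\exp\{\sigma\mathbf{W}_{\tau\wedge\varsigma}-\tfrac12\sigma^2(\tau\wedge\varsigma)\}$, and conclude by dominated convergence since the integrand vanishes a.s.\ as $x\downarrow0$. Your optional-sampling-plus-Fatou justification of the integrability of the dominating variable is in fact more careful than the paper's bare assertion that its expectation equals $1$ (which, for a possibly infinite stopping time, should really be stated as an inequality, exactly as you do).
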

\begin{proof} Using $g(0)=0$, we have for $x>0$
\begin{eqnarray}\label{e37}
0&\leq &\frac{g(x)-g(0)}{x-0}\nonumber\\
&=& {\bf E}\large\{ \large (\exp\{-\frac{1}{2}
\sigma^2\tau\wedge \varsigma+\sigma\mathbf{W}_{\tau\wedge \varsigma
} \}\nonumber\\
&& \qquad -\frac{q}{x}\exp\{ (\gamma-r )\tau\wedge \varsigma\}
\large )_+I_{\{    \varsigma< +\infty\}}
\large \}\nonumber\\
&\leq &  {\bf E}\large\{ \large (\exp\{-\frac{1}{2}
\sigma^2\tau\wedge \varsigma+\sigma\mathbf{W}_{\tau\wedge \varsigma
} \}-\frac{q}{x} \large )_+I_{\{ \varsigma< +\infty\}} \large \}.
\end{eqnarray}
Note that $\large\{\exp\{-\frac{1}{2}\sigma^2t+\sigma \mathbf{W}_t
\}, t\geq 0 \large\} $ is a strong martingale w.r.t. $\{ \mathcal
{F}_{t}\}_{t\geq 0}  $. It follows that
\begin{eqnarray}   \label{C1}
 {\bf E}\large\{ \exp\{-\frac{1}{2}
\sigma^2\tau\wedge \varsigma+\sigma\mathbf{W}_{\tau\wedge \varsigma
}\} \large\}=1.
\end{eqnarray}
 To let $x\rightarrow 0$ in (\ref{e37}), we need to dominate the
 right-hand side. Because
 \begin{eqnarray*} \label{C2}
  \Large (\exp\{-\frac{1}{2}
\sigma^2\tau\wedge \varsigma+\sigma\mathbf{W}_{\tau\wedge \varsigma
} \}-\frac{q}{x} \Large )_+I_{\{ \varsigma< +\infty\}}\leq
\exp\{-\frac{1}{2}
\sigma^2\tau\wedge \varsigma+\sigma\mathbf{W}{\tau\wedge \varsigma } \},\nonumber\\
\end{eqnarray*}
\begin{eqnarray*}\label{C3}
 \lim_{x\rightarrow 0+}\Large (\exp\{-\frac{1}{2}
\sigma^2\tau\wedge \varsigma+\sigma\mathbf{W}_{\tau\wedge \varsigma
} \}-\frac{q}{x} \Large )_+I_{\{ \varsigma< +\infty\}}=0. \quad   a.s.,
 \end{eqnarray*}
and (\ref{C1}), the dominated convergence theorem allows us to let
$x\rightarrow 0 $ in (\ref{e37}) and obtain
 \begin{eqnarray*}
 0\leq \lim_{x\rightarrow 0+}\frac{g(x)-g(0)}{x-0}
 \leq \lim_{x\rightarrow 0+} {\bf E}\large\{ \big (\exp\{-\frac{1}{2}
\sigma^2\tau\wedge \varsigma+\sigma\mathbf{W}_{\tau\wedge \varsigma
} \}-\frac{q}{x} \big )_+I_{\{    \varsigma< +\infty\}} \large \}=0.
 \end{eqnarray*}
Thus $g^{'}(0+)=0$.
\vskip 5pt\noindent
\end{proof}
\vskip 5pt\noindent In view of Proposition \ref{Prop1} and
variational
 inequality method, $\tau_b$ must be the optimal time, where $b$ is
 defined by (\ref{b}). Now we show the fact.
 \vskip 5pt\noindent
\begin{Prop}\label{Prop3}
Let $b$  and $h(x) $ be defined by (\ref{b}), (\ref{solution21}) and
(\ref{solution22}   ) respectively,  $ E(x)\equiv {\bf E}\big
[e^{-\tilde{r}(\tau_{b}\wedge\tau_{L})}
(\tilde{S}_{\tau_{b}\wedge\tau^{}_{L}}\wedge
L-q)_{+}I_{\{\tau^{}_{L}<\infty\}} \big ]$ for $L\geq 0 $, where
$x=S_0$, $\tau_{b^{}}=\inf{\{t\geq 0:e^{-\gamma t}S_{t}\geq b^{}\}}$
and $\tau^{}_{L}=\inf{\{t\geq 0:e^{-\gamma t}S_{t}=L\}}$. Then
$E(x)=h(x)$.
\end{Prop}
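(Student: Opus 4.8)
The plan is to compute $E(x)$ directly by evaluating expectations of exponential functionals of Brownian motion at the hitting times $\tau_b$ and $\tau_L$, and to check that the resulting closed form matches the piecewise expression for $h(x)$ in \eqref{solution21}--\eqref{solution22}. First I would rewrite $\tilde S_t = e^{-\gamma t}S_t = x\exp\{(r-\gamma-\delta-\tfrac{\sigma^2}{2})t+\sigma\mathbf W_t\}$, so that $e^{-\tilde r t}\tilde S_t^{\lambda}$ and $e^{-\tilde r t}$ times powers of $\tilde S_t$ become exponential martingales precisely when the exponent solves the characteristic (indicial) equation $\tfrac12\sigma^2\lambda(\lambda-1)+(\tilde r-\delta)\lambda-\tilde r=0$, whose roots are $\lambda_1,\lambda_2$ from \eqref{e35}. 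The key identity to establish is that for the geometric Brownian motion $\tilde S$, and for $x$ between the relevant barriers,
\begin{eqnarray*}
{\bf E}\big[e^{-\tilde r(\tau_b\wedge\tau_L)}\,\tilde S_{\tau_b\wedge\tau_L}^{\lambda_i}\big]
= x^{\lambda_i},\qquad i=1,2,
\end{eqnarray*}
i.e. that the stopped process $\{e^{-\tilde r(t\wedge\tau_b\wedge\tau_L)}\tilde S_{t\wedge\tau_b\wedge\tau_L}^{\lambda_i}\}$ is a uniformly integrable martingale, so that optional sampling applies.

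Next I would split into the two geometric cases of Proposition \ref{Prop1}. In case (i), $L\ge b$: starting from $x\le b$, the process reaches $b$ before $L$ (since $b<L$), so $\tau_b\wedge\tau_L=\tau_b$, the indicator $I_{\{\tau_L<\infty\}}$ is handled by noting $\tilde S_{\tau_b}=b$ on $\{\tau_b<\infty\}$ and the payoff reduces to $(b-q)_+=b-q$; using the martingale identity with the root $\lambda_1$ (the one making the martingale bounded on $[0,b]$, since $\lambda_1>1$) gives ${\bf E}[e^{-\tilde r\tau_b}] = (x/b)^{\lambda_1}$ and hence $E(x)=(b-q)(x/b)^{\lambda_1}$, matching \eqref{solution21}. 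For $b<x<L$ the barrier $b$ is below the start, but $h$ is already equal to the payoff $x-q$ there and I would argue $E(x)=x-q$ by observing $\tau_b=0$. For $x\ge L$, we have $\tau_L$ before $\tau_b$, the payoff at $\tau_L$ is $L-q$, and the relevant bounded martingale uses $\lambda_2$ (since $x\ge L$ and $\lambda_2\in(0,1]$), giving ${\bf E}[e^{-\tilde r\tau_L}I_{\{\tau_L<\infty\}}]=(x/L)^{\lambda_2}$ and thus $E(x)=(L-q)(x/L)^{\lambda_2}$. Case (ii), $L<b$: here the barrier $b$ is irrelevant, only $\tau_L$ matters, and the same two computations (with $\lambda_1$ for $x<L$ and $\lambda_2$ for $x>L$) reproduce \eqref{solution22}.

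The main obstacle is the justification of optional stopping at the possibly-infinite hitting times when $\tilde r=r-\gamma\le 0$: the discount factor $e^{-\tilde r t}$ does not decay (it grows), so uniform integrability of the stopped exponential martingales is not automatic and must be extracted from the drift assumptions $\delta>0,\ \gamma-r+\delta\ge0$ or $\delta=0,\ \gamma-r>\sigma^2/2$. Concretely, I would verify that these hypotheses force the exponent $\mu$ in \eqref{e35} and the roots $\lambda_1,\lambda_2$ to have signs guaranteeing that $e^{-\tilde r(t\wedge\tau)}\tilde S_{t\wedge\tau}^{\lambda_i}$ stays bounded by a constant (between the two barriers the relevant power of $\tilde S$ is bounded, and one checks the remaining time-exponential is $\le 1$), which delivers the dominated/uniformly-integrable convergence needed to pass to the limit and to kill the contribution of $\{\tau_L=\infty\}$. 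Once the two martingale identities are in hand, assembling $E(x)=h(x)$ on each piece is routine arithmetic, and continuity at the junction points $b$ and $L$ follows from the smooth-fit relations already recorded in Proposition \ref{Prop1}.
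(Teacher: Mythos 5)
Your overall strategy coincides with the paper's: both proofs evaluate $E(x)$ piecewise (splitting into $L>b$ versus $L\le b$, and then according to the position of $x$ relative to $b$ and $L$), reduce each piece to the Laplace transform ${\bf E}\bigl[e^{-\tilde r(\tau_b\wedge\tau_L)}I_{\{\cdot<\infty\}}\bigr]$ evaluated at the roots $\lambda_1,\lambda_2$ of the indicial equation, and then match the resulting expressions against \eqref{solution21} and \eqref{solution22}. Your case analysis (payoff $b-q$ at $\tau_b$ when $x<b\le L$; $\tau_b=0$ when $b\le x\le L$; payoff $L-q$ at $\tau_L$ when $x\ge L$; only $\tau_L$ relevant when $L<b$) is exactly the paper's. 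The one place you genuinely diverge is the source of the key identity: the paper simply cites formula 2.20.3 of Borodin--Salminen \cite{Handbook} for \eqref{C5}, whereas you propose to derive it by optional sampling of the exponential martingales $e^{-\tilde r t}\tilde S_t^{\lambda_i}$.

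That derivation, as you sketch it, has a gap precisely at the point you flag as the main obstacle. You claim that between the barriers the stopped process $e^{-\tilde r(t\wedge\tau)}\tilde S_{t\wedge\tau}^{\lambda_i}$ is bounded because ``the relevant power of $\tilde S$ is bounded and the remaining time-exponential is $\le 1$.'' But here $\tilde r=r-\gamma\le 0$, so the time factor $e^{-\tilde r t}=e^{(\gamma-r)t}\ge 1$ \emph{grows} without bound; on the event that the barrier is never reached (which has positive probability, since $\log\tilde S$ has negative drift under the standing assumptions), the stopped martingale is unbounded and the asserted domination fails. Fatou then only gives the inequality ${\bf E}[e^{-\tilde r\tau}\tilde S_\tau^{\lambda_i}I_{\{\tau<\infty\}}]\le x^{\lambda_i}$, not the equality you need. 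To close this you must either cite the known first-passage formula (as the paper does), or argue equality directly, e.g.\ by writing $e^{-\tilde r t}\tilde S_t^{\lambda_i}=x^{\lambda_i}\exp\{\lambda_i\sigma\mathbf W_t-\tfrac12\lambda_i^2\sigma^2 t\}$ (which the characteristic equation guarantees) and using the Girsanov change of measure $d Q=\exp\{\lambda_i\sigma\mathbf W_t-\tfrac12\lambda_i^2\sigma^2 t\}\,dP$: the identity ${\bf E}[e^{-\tilde r\tau}I_{\{\tau<\infty\}}\,(\cdot)]=x^{\lambda_i}Q(\tau<\infty)$ then reduces the problem to checking that the tilted drift $\sigma(\sigma\lambda_i+\mu)=\pm\sigma\sqrt{\mu^2-2(\gamma-r)}$ has the right sign for the relevant barrier to be hit $Q$-a.s. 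A secondary point: for $x<b$ your payoff computation silently replaces $I_{\{\tau_L<\infty\}}$ by $I_{\{\tau_b<\infty\}}$; noting that $\tilde S_{\tau_b}=b$ does not by itself justify this, though the same issue is present in the paper's own display \eqref{C5}--\eqref{C6}.
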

\begin{proof}
Using formula 2.20.3 in Section 9, Part II of \cite{Handbook}(2000),
 we have
 \begin{eqnarray}\label{C5}
{\bf E}\big
[e^{-\tilde{r}(\tau_{b}\wedge\tau_{L})}
I_{\{\tau_{b}\wedge\tau_{L}<\infty\}}\big
]=\left\{
\begin{array}{l l}
(\frac{x}{b})^{\lambda_{1}},x<b,\\
(\frac{x}{L})^{\lambda_{2}},x\geq L.
\end{array}
\right.
\end{eqnarray}
Based on the (\ref{C5}), the proof of Proposition \ref{Prop3} will
be accomplished in two cases, namely, $L>b$ and $L\leq b$. \vskip
5pt\noindent {\bf Case of $L>b$. }  We shall distinguish three
subcases, i.e., $x<b$, $L\geq x\geq b$ and  $ x\geq L $. \vskip
5pt\noindent If $x<b$ then $\tau_{b}\leq\tau_{L}$. Using(\ref{C5}),
we calculate
\begin{eqnarray}\label{C6}
E(x)&=&{\bf E}\big
[e^{-\tilde{r}(\tau_{b}\wedge\tau_{L})}
(\tilde{S}_{\tau_{b}\wedge\tau_{L}}\wedge
L-q)_{+}I_{\{\tau_{L}<\infty\}}
\big ]\nonumber\\
&=&{\bf E}\big
[e^{-\tilde{r}(\tau_{b}\wedge\tau_{L})}
(\tilde{S}_{\tau_{b}}\wedge
L-q)_{+}I_{\{\tau_{L}<\infty\}}
\big ]
\nonumber\\
&=&(b-q)(\frac{x}{b})^{\lambda_{1}}.
\end{eqnarray}
If  $L\geq x\geq b$ then $\tau_{b}=0$ and so
\begin{eqnarray}\label{C7}
E(x)&=&{\bf E}\big [e^{-\tilde{r}(\tau_{b}\wedge\tau_{L})}
(\tilde{S}_{\tau_{b}\wedge\tau_{L}}\wedge
L-q)_{+}I_{\{\tau_{L}<\infty\}}
\big ]\nonumber\\
&=&{\bf E}\big [(\tilde{S}_{0}\wedge L-q)_{+}\big ]=
(x-q).
\end{eqnarray}
If $ x> L $ then $\tilde{S}_{\tau_{b}\wedge\tau_{L}}\geq L    $. By
using (\ref{C5}),
\begin{eqnarray}\label{C8}
E(x)&=&{\bf E}\big [e^{-\tilde{r}(\tau_{b}\wedge\tau_{L})}
(\tilde{S}_{\tau_{b}\wedge\tau_{L}}\wedge
L-q)_{+}I_{\{\tau_{L}<\infty\}}
\big ]\nonumber\\
&=&(L-q){\bf E}\big [e^{-\tilde{r}(\tau_{b}\wedge\tau_{L})}
I_{\{\tau_{L}<\infty\}}
\big ]\nonumber\\
&=&(L-q)(\frac{x}{L})^{\lambda_{2}}.
\end{eqnarray}
Comparing with (\ref{solution21}), the equations (\ref{C6}),
(\ref{C7}) and (\ref{C8}) yield that $ E(x)=h(x)$. \vskip
5pt\noindent {\bf Case of $b\geq L$.}  We shall distinguish two
subcases, i.e.,$x<L$ and $x\geq L   $. \vskip 5pt\noindent If $x<L$
then $\tau_b \geq \tau_l  $ . Using (\ref{C5}), we have
\begin{eqnarray}\label{C9}
 E(x)&=&{\bf E}\big
[e^{-\tilde{r}(\tau_{b}\wedge\tau_{L})}
(\tilde{S}_{\tau_{b}\wedge\tau{L}}\wedge
L-q)_{+}I_{\{\tau_{L}<\infty\}}
\big ]\nonumber\\
&=&{\bf E}\big
[e^{-\tilde{r}(\tau_{b}\wedge\tau_{L})}
(\tilde{S}_{\tau{L}}\wedge
L-q)_{+}I_{\{\tau_{L}<\infty\}}
\big ]\nonumber\\
&=&
(L-q)(\frac{x}{L})^{\lambda_{1}}.
\end{eqnarray}
If $ x\geq L $ then $\tilde{S}_{\tau_{b}\wedge\tau_{L}}\wedge
L=L     $. So by using (\ref{C5})
\begin{eqnarray}\label{C10}
E(x)&=&{\bf E}\big [e^{-\tilde{r}(\tau_{b}\wedge\tau_{L})}
(\tilde{S}_{\tau_{b}\wedge\tau_{L}}\wedge
L-q)_{+}I_{\{\tau_{L}<\infty\}}
\big ]\nonumber\\
&=&(L-q){\bf E}\big [e^{-\tilde{r}(\tau_{b}\wedge\tau_{L})}
I_{\{\tau_{L}<\infty\}}
\big ]\nonumber\\
&=&(L-q)(\frac{x}{L})^{\lambda_{2}} .
\end{eqnarray}
Comparing  the equations (\ref{C9}) and (\ref{C10}) with the (\ref{solution22}), we see that
$E(x)=h(x)$.
\end{proof}
\vskip 5pt\noindent We now return to main result of this section. It
states that the initial value of the capped stock loan is just
$f(x)$ defined by (\ref{solution21}) and (\ref{solution22})
respectively, where $x=S_0$. \vskip 5pt\noindent
\begin{Them}\label{main2}
Assume that $\delta>0$ and $\gamma-r+\delta\geq 0$ or $\delta=0$ and
$\gamma-r>\frac{\sigma^{2}}{2}$. Let  $h(x)$ be defined by
(\ref{solution21}) and (\ref{solution22})respectively,  and
 $f(x)$ be defined by (\ref{reward2}). Then $ f(x)=h(x)$ for $x\geq 0$, and
 $\tau_{b}\wedge\tau^{}_{L}$ is the optimal stopping time.
\end{Them}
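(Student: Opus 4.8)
The plan is to sandwich $f$ between $h$ and $h$, i.e. to establish $f(x)\ge h(x)$ and $f(x)\le h(x)$ for every $x\ge0$; once both hold, the equality $f=h$ is immediate, and since the stopping rule $\tau_{b}\wedge\tau_{L}$ realizes the value $h(x)$, it is automatically optimal.

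The inequality $f(x)\ge h(x)$ requires almost no work: $\tau_{b}\wedge\tau_{L}$ lies in $\mathcal T_{0}$, and Proposition \ref{Prop3} shows that the discounted expected payoff associated with this admissible rule is (at least) $E(x)=h(x)$; taking the supremum in (\ref{reward2}) over all of $\mathcal T_{0}$ therefore gives $f(x)\ge h(x)$.

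For the reverse inequality I would run a verification argument. Fix $\tau\in\mathcal T_{0}$. Since $h\in\mathcal C^{1}\big((0,\infty)\setminus\{L\}\big)\cap\mathcal C^{2}\big((0,\infty)\setminus\{b,L\}\big)$ by Proposition \ref{Prop1} and $\tilde S_{t}>0$ a.s., the classical It\^{o} formula is unavailable at $x=L$, so I would apply the generalized It\^{o} (It\^{o}--Tanaka / Meyer) formula to $e^{-\tilde r t}h(\tilde S_{t})$. Writing $\mathcal L:=\tfrac12\sigma^{2}x^{2}\partial_{xx}+(\tilde r-\delta)x\partial_{x}$ for the generator of $\tilde S$, this yields
\begin{eqnarray*}
e^{-\tilde r t}h(\tilde S_{t})=h(x)+\int_{0}^{t}e^{-\tilde r s}\big(\mathcal L h-\tilde r h\big)(\tilde S_{s})\,ds+\tfrac12\big(h'(L+)-h'(L-)\big)\int_{0}^{t}e^{-\tilde r s}\,dL_{s}^{L}+M_{t},
\end{eqnarray*}
where $M_{t}=\int_{0}^{t}e^{-\tilde r s}\sigma\tilde S_{s}h'(\tilde S_{s})\,dW_{s}$ is a local martingale and $L_{s}^{L}$ is the local time of $\tilde S$ at the level $L$. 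By the variational inequalities (\ref{e31}) and (\ref{equivalent22}), the drift $\mathcal L h-\tilde r h$ vanishes on the continuation set and equals $-\delta x+\tilde r q$ on the exercise set $[b,L]$ (the single point $\{L\}$ when $L<b$); as $\tilde r=r-\gamma\le0$ and $\delta\ge0$, this is $\le0$ everywhere. A direct computation from (\ref{solution21})--(\ref{solution22}) shows $h'(L+)-h'(L-)<0$, and $L_{s}^{L}$ is nondecreasing, so the local-time term is $\le0$ as well. Hence $e^{-\tilde r t}h(\tilde S_{t})$ is a local supermartingale, and being nonnegative (Proposition \ref{inequality} gives $h\ge0$) it is a genuine supermartingale. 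Optional sampling at $\tau\wedge n$, the elementary bound $h(x)\ge(x\wedge L-q)_{+}$ (immediate from (\ref{solution21})--(\ref{solution22})), and Fatou's lemma as $n\to\infty$ then yield $h(x)\ge{\bf E}\big[e^{-\tilde r\tau}(\tilde S_{\tau}\wedge L-q)_{+}I_{\{\tau<\infty\}}\big]$; taking the supremum over $\tau\in\mathcal T_{0}$ gives $h(x)\ge f(x)$.

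I expect the main obstacle to be the loss of $\mathcal C^{2}$ regularity at the cap $L$: one must leave the classical It\^{o} framework, carry the extra local-time term, and verify that its coefficient $\tfrac12\big(h'(L+)-h'(L-)\big)$ is nonpositive — precisely the point at which the dichotomy $L\ge b$ versus $L<b$ is used. A secondary nuisance is the exploding discount factor $e^{-\tilde r t}=e^{(\gamma-r)t}$, which rules out a brute-force dominated-convergence argument in the optional sampling step; this is circumvented by using that a nonnegative local supermartingale is a supermartingale (so optional sampling is licit at unbounded stopping times), together with the domination $e^{-\tilde r t}h(\tilde S_{t})\le e^{-\tilde r t}\tilde S_{t}\le x\,e^{-\sigma^{2}t/2+\sigma\mathbf{W}_{t}}$, whose right-hand side is a martingale.
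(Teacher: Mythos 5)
Your proposal is correct and follows the same overall strategy as the paper's proof: the lower bound $f\ge h$ is obtained from Proposition \ref{Prop3} since $\tau_b\wedge\tau_L\in\mathcal T_0$, and the upper bound is a verification argument applying the generalized It\^{o} formula to $e^{-\tilde{r}t}h(\tilde{S}_t)$, with the drift $\tilde{r}q-\delta x$ on the exercise region nonpositive because $\tilde{r}\le 0$ and $\delta\ge 0$. The two arguments diverge only in how the singular term at $L$ and the final limit are handled, and there your version is arguably the more robust one. The paper kills the local-time contribution by stopping at $T_n=t\wedge\tau\wedge\tau_L-\frac{1}{n}$ (so that no local time at $L$ has yet accrued) and then passes to the limit by dominated convergence with the dominating variable $\sup_{t}e^{-\tilde{r}t}(\tilde{S}_t-q)_+$, whose integrability is asserted without proof and is in fact delicate when $\delta=0$. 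You instead check directly from (\ref{solution21}) and (\ref{solution22}) that $h'(L+)-h'(L-)\le 0$ in both regimes (using $\lambda_1>1\ge\lambda_2>0$ and $L>q$), so that $e^{-\tilde{r}t}h(\tilde{S}_t)$ is a nonnegative local supermartingale, hence a genuine supermartingale, and you conclude by optional sampling and Fatou, which requires no integrable dominant. This also yields the bound at an arbitrary stopping time $\tau$ rather than only at $\tau\wedge\tau_L$, which matches the definition (\ref{reward2}) more directly. The only caveat is the convention in the It\^{o}--Meyer formula (your factor $\tfrac{1}{2}$ versus the paper's coefficient $h'(L+)-h'(L-)$ depends on which local time is used), but the sign, and hence the argument, is unaffected.
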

\begin{proof} In view of Proposition \ref{Prop3},
 $$f(x)\equiv\sup\limits_{\tau \in \mathcal {T}_{0}}{\bf E}\big
[e^{-\tilde {r}\tau}(\tilde{S}_{\tau}\wedge
L-q)_{+}I_{\{\tau_{L}<\infty\} }\big ]\geq h(x).$$ Therefore we only need  to prove that for any stopping
time $\tau$
\begin{eqnarray} \label{Appendx1}
h(x)\geq {\bf E}\big [e^{-\tilde{r}(\tau\wedge\tau^{}_{L})}(\tilde
{S}_{\tau\wedge\tau^{}_{L}}\wedge
L-q)_{+}I_{\{\tau^{}_{L}<\infty\}}\big ].
\end{eqnarray}
From Proposition \ref{Prop1} and the expression of $h(x)$ as well as
  $\lambda_1 >1 \geq \lambda_2 >0 $, we know that
$h(x)\in \mathcal
{C}\big([0,\infty)\big )\cap\mathcal
{C}^{1}\big((0,\infty)\setminus \{L\}\big)\cap \mathcal
{C}^{2}\big((0,\infty)\setminus \{b^{},L\}\big )$  or
 $h(x)\in \mathcal
{C}\big([0,\infty)\big )\cap\mathcal {C}^{1}\big((0,\infty)\setminus
\{L\}\big)\cap \mathcal {C}^{2}\big((0,\infty)\setminus \{L\}\big )$
and $0< h''(b\pm)<+\infty$, $ 0 < h'(L\pm)< +\infty $ and $ 0 <
h''(L\pm)< +\infty $.  As a result,  the  generalized
  It\^{o}'s formula(cf.Karatzas and Shreve\cite{Brownian}(1991)
   for Problem 6.24, p.215 ), the inequalities (\ref{e31}) and (\ref{equivalent22})
yield
 \begin{eqnarray}\label{Appendx2}
\int_{0}^{t}d(e^{-\tilde{r}s}h(\tilde{S}_{s}))
&=&\int_{0}^{t}e^{-\tilde{r}s}\tilde{S}_{s}h^{'}(\tilde{S}_{s})\sigma
d\mathcal{W}(s)\nonumber\\
&&+\int_{0}^{t}e^{-\tilde{r}s}(\tilde{r}q-\delta \tilde{S}_{s}
) I_{\{L>\tilde{S}_{s}\geq b^*\}}ds\nonumber\\
&&+[h^{'}(L+)-h^{'}(L-)]\int_{0}^{
t}e^{-\tilde{r}s}dL_{s}(L)\nonumber\\
&\equiv & \mathcal{M}(t) + A_1(t) +\Lambda(t),
\end{eqnarray}
where $L_{t}(L)\geq0$  is  the local time of $\tilde{S}_{t} $ at the
point $ L$, $
\mathcal{M}(t)=\int_{0}^{t}e^{-\tilde{r}s}\tilde{S}_{s}h^{'}(\tilde{S}_{s})\sigma
d\mathcal{W}(s)$ is a  martingale,
\begin{eqnarray*}
A_1(t)&=&\int_{0}^{t}e^{-\tilde{r}s}(\tilde{r}q-\delta \tilde{S}_{s}
) I_{\{L>\tilde{S}_{s}\geq b^*\}}ds \qquad \mbox{and}\\
\Lambda(t)&=&  \ \ [h^{'}(L+)-h^{'}(L-)]\int_{0}^{
t}e^{-\tilde{r}s}dL_{s}(L).
\end{eqnarray*}
Define $T_n=t\wedge \tau \wedge \tau^{}_L -\frac{1}{n} $ for $
\forall t\geq 0 $ , $ n\geq 1 $   and  stopping time $ \tau $.  Then
by using definition of $\tau^{}_L $ and equality $L_t(L)=\int^t_0
I_{\{s: \tilde{S}_s=L\}}dL_s(L)$, we have $ \Lambda(T_n)=0$.
Moreover, since $ \tilde{r}\leq 0$ and $\delta\geq 0$, we see that
$A_1(T_n)\leq 0$. So by (\ref{Appendx2})
\begin{eqnarray*}\label{Appendx3}
h(\tilde{S}_{0})&=&{\bf E}\big [e^{-\tilde{r}T_n}h(\tilde{S}_{T_n})
I_{\{\tau^{}_{L}<\infty\}}\big ]+{\bf E} \big
[-A_1(T_n)I_{\{\tau^{}_{L}<\infty\}}\big ]
\nonumber\\
&\geq &{\bf E}\big [e^{-\tilde{r}T_n}h(\tilde{S}_{T_n})
I_{\{\tau^{}_{L}<\infty\}}\big ]\nonumber\\
&=&{\bf E}\big
[e^{-\tilde{r}T_n}h(\tilde{S}_{T_n})I_{\{\tau^{}_{L}<\infty\}}\big ]\nonumber\\
&\geq&{\bf E}\big [e^{-\tilde{r}T_n}(\tilde{S}_{T_n}\wedge
L-q)_{+}I_{\{\tau^{}_{L}<\infty\}}\big ].
\end{eqnarray*}
Note that
\begin{eqnarray*} e^{-\tilde{r}T_n}(\tilde{S}_{T_n}\wedge
L-q)_{+}I_{\{\tau^{}_{L}<\infty\}}\leq\sup\limits_{0\leq
t<\infty}e^{-\tilde{r}t}(\tilde{S}_{t}-q)_{+}
\end{eqnarray*}
and
 $${\bf E}\big [\sup\limits_{0\leq
t<\infty}e^{-\tilde{r}t}(\tilde{S}_{t}-q)_{+}\big]<\infty.$$
The dominated convergence theorem now implies
\begin{eqnarray*} \label{Appendx4}
h(x)=h(\tilde{S}_{0})&\geq&  \lim_{n\rightarrow +\infty}
{\bf E}\big [e^{-\tilde{r}T_n}(\tilde{S}_{T_n}\wedge
L-q)_{+}I_{\{\tau^{}_{L}<\infty\}}\big ]
\nonumber\\&=&{\bf E}\big
[e^{-\tilde{r}(\tau\wedge\tau^{}_{L})}(\tilde{S}_{\tau\wedge\tau^{}_{L}}\wedge
L-q)_{+}I_{\{\tau^{}_{L}<\infty\}} \big ].
\end{eqnarray*}
 Thus $h(x)=f(x)$ and
$\tau_{b}\wedge\tau^{}_{L}$ is the optimal stopping time.
\end{proof}
\vskip 5pt \noindent As a direct consequence of Theorem \ref{main2},
i.e.,  $ L=+\infty$,  we can get the following result proved by a
pure probability approach in Xia and Zhou\cite{stock}. \vskip 5pt
\noindent
\begin{CR}
Let $f(x)=\sup\limits_{\tau \in \mathcal {T}_{0}}{\bf E}\big [e^{-\tilde
{r}\tau}(\tilde{S}_{\tau}-q)_{+}\big ]$, $b$ and $ \lambda_1 $ be defined by
(\ref{b}) and (\ref{e35}) respectively.  Then
\begin{eqnarray}
f(x)=\left\{
\begin{array}{l l l}
(b^{}-q)(\frac{x}{b^{}})^{\lambda_{1}},&0< x\leq b^{},\\
x-q ,& x> b
\end{array}
\right.
\end{eqnarray}
and $ \tau_b  $ is the optimal stoping time.
\end{CR}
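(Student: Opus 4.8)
The plan is to derive this corollary from Theorem \ref{main2} by sending the cap $L$ to $+\infty$, the decisive observation being that the free boundary $b=\frac{q\lambda_{1}}{\lambda_{1}-1}$ in (\ref{b}) does not depend on $L$, so that only the $\lambda_{2}$--branch of the capped value changes with $L$, and that branch becomes irrelevant for any fixed $x$ once $L$ is large.

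Concretely, for $L>\max\{b,S_{0}\}$ let $f_{L}(x)=\sup_{\tau\in\mathcal{T}_{0}}{\bf E}\big[e^{-\tilde{r}\tau}(\tilde{S}_{\tau}\wedge L-q)_{+}\big]$ be the capped value in (\ref{reward2}). Since $\ell\mapsto(\tilde{S}_{\tau}\wedge\ell-q)_{+}$ is nondecreasing, $f_{L}(x)$ is nondecreasing in $L$; by monotone convergence, for each fixed $\tau$,
\[
{\bf E}\big[e^{-\tilde{r}\tau}(\tilde{S}_{\tau}-q)_{+}\big]=\lim_{L\to\infty}{\bf E}\big[e^{-\tilde{r}\tau}(\tilde{S}_{\tau}\wedge L-q)_{+}\big]=\sup_{L}{\bf E}\big[e^{-\tilde{r}\tau}(\tilde{S}_{\tau}\wedge L-q)_{+}\big],
\]
and interchanging the two suprema gives $f(x)=\sup_{\tau}\sup_{L}(\cdots)=\sup_{L}f_{L}(x)=\lim_{L\to\infty}f_{L}(x)$.

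Next I would invoke Theorem \ref{main2}(i), valid once $L\geq b$ and $L>S_{0}$, to get $f_{L}(x)=h_{L}(x)$ with $h_{L}$ as in (\ref{solution21}): $h_{L}(x)=(b-q)(x/b)^{\lambda_{1}}$ on $(0,b]$, $h_{L}(x)=x-q$ on $(b,L)$, and $h_{L}(x)=(L-q)(x/L)^{\lambda_{2}}$ on $[L,\infty)$. For a fixed $x>0$ and all $L>\max\{x,b,S_{0}\}$ we then have $h_{L}(x)=(b-q)(x/b)^{\lambda_{1}}$ when $0<x\leq b$ and $h_{L}(x)=x-q$ when $x>b$, with no dependence on $L$; letting $L\to\infty$ and using the previous paragraph yields the asserted formula for $f(x)$ (continuity at $0$ being clear from $\lambda_{1}>1$). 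For optimality, since $\tau_{L}\uparrow+\infty$, the time $\tau_{b}\wedge\tau_{L}$ of Theorem \ref{main2} degenerates to $\tau_{b}$; and $\tau_{b}$ is optimal for the uncapped problem directly: for $x\geq b$ this is trivial since $\tau_{b}=0$ and $b>q$, while for $0<x<b$ one has $\tilde{S}_{\tau_{b}}=b$ on $\{\tau_{b}<\infty\}$ and zero payoff on $\{\tau_{b}=\infty\}$, so by formula (\ref{C5}) with $L=+\infty$,
\[
{\bf E}\big[e^{-\tilde{r}\tau_{b}}(\tilde{S}_{\tau_{b}}-q)_{+}\big]=(b-q)\,{\bf E}\big[e^{-\tilde{r}\tau_{b}}I_{\{\tau_{b}<\infty\}}\big]=(b-q)\Big(\tfrac{x}{b}\Big)^{\lambda_{1}}=f(x).
\]

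The only step I expect to require genuine care is the passage $f(x)=\lim_{L\to\infty}f_{L}(x)$ together with the remark that $b$ is $L$--free and the $\lambda_{2}$--branch drops out for each fixed $x$; everything else is a direct quotation of Theorem \ref{main2}, Proposition \ref{Prop3} and (\ref{C5}). As an alternative, fully self-contained route, one could instead rerun the proof of Theorem \ref{main2} with $L=+\infty$: the generalized It\^{o} expansion (\ref{Appendx2}) then loses its local-time term and its upper branch, and the supermartingale bound $h(\tilde{S}_{0})\geq{\bf E}[e^{-\tilde{r}\tau}(\tilde{S}_{\tau}-q)_{+}]$ follows from $\tilde{r}\leq 0$, $\delta\geq 0$ and ${\bf E}[\sup_{t\geq 0}e^{-\tilde{r}t}(\tilde{S}_{t}-q)_{+}]<\infty$ exactly as there.
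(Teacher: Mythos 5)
Your proposal is correct and follows the same route as the paper, which states the corollary without further argument as ``a direct consequence of Theorem \ref{main2}, i.e., $L=+\infty$.'' Your monotone-convergence and interchange-of-suprema justification of $f(x)=\lim_{L\to\infty}f_{L}(x)$, together with the observation that $b$ in (\ref{b}) is independent of $L$ so the $\lambda_{2}$-branch drops out for each fixed $x$, simply supplies the details the paper leaves implicit, and your direct verification of the optimality of $\tau_{b}$ via (\ref{C5}) is consistent with Proposition \ref{Prop3}.
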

\vskip 5pt \noindent
\begin{Remark}Comparing with the value of capped American option with
non-negative interest rate(cf.Broadie and Detemple \cite{BRDE}(1995)
), we see that the value of capped stock loan treated in the present paper
 has  different behaviors.
 If the stock price is bigger than $L$ then the  capped American
option with non-negative interest rate should be exercise
immediately. While  the capped stock loan treated in the present paper
has no this kind of  performances.
\end{Remark}
\vskip 5pt \noindent
\begin{Remark} Let $V_t$ be the value process for the capped stock loan defined by (\ref{C24}). Then by using  Markov property of the process $S_{t}$
(cf. Karatzas and Shreve\cite{Methods}(1998)), we easily get
\begin{eqnarray*}
e^{-\tilde{r}t}f(\tilde{S}_{t})&=&\sup\limits_{\tau \in \mathcal
{T}_{t}}{\bf
E}\big[e^{-\tilde{r}(\tau\wedge\tau_{L})}(\tilde{S}_{\tau\wedge\tau_{L}}\wedge
L-q)_{+}I_{\{\tau_{L}<\infty\}}|\mathcal{F}_{t}\big]\\
&=&e^{-rt}V_{t}
\end{eqnarray*} and $V_{t}=e^{\gamma t}f(e^{-\gamma t}S_{t})$.
\end{Remark}
\vskip 5pt \noindent
 \setcounter{equation}{0}
\section{{\small {\bf Ranges of fair values
of parameters  }}} \vskip 10pt \noindent In this  section we will
work out the ranges of  fair values of the
 parameters  $(q, \gamma,c, L)$
 of the capped stock loan treated in this paper based on
  Theorem \ref{main2} and equality $f(S_0)=S_0
-q+c$. In view of Proposition\ref {Prop1}, there are two cases to be
dealt with. The first is when $ L>b$ and the second case is when $
L\leq b$. We only consider the first case, the second case can be
treated similarly. We shall distinguish three subcases, i.e.,
$S_{0}\geq L$, $L>S_{0}\geq b$ and $0< S_{0}\leq b$. \vskip 5pt
\noindent {\bf  Case of $S_{0}\geq L$}.  By (\ref{solution21}) and
$f(S_0)=S_0 -q+c$, we have
$f(S_0)=(L-q)(\frac{S_{0}}{L})^{\lambda_{2}}= S_0 -q+c$. So the
parameters  $\gamma$, $q$, $c$ and $L $  must satisfy
$c=(L-q)(\frac{S_{0}}{L})^{\lambda_{2}} +q -S_0 \leq 0$. This means
that the bank wants to earn maximal profit
 because the initial stock price is very high. The bank
  gets the stock via paying the money $c$ to
the client,  the client gives the stock to the bank due to the money
$c$. Therefore both the client and the bank have  incentives to do
the  business. Actually, $\tau^{}_{L}$ is the optimal stopping time
for the client to terminate the capped stock loans. \vskip 5pt
\noindent {\bf Case of  $L>S_{0}\geq b $}. By (\ref{solution21})
 and $f(S_0)=S_0 -q+c$, we have
$f(S_0)=S_{0}-q=S_{0}-q+c$, so $c=0$, i.e.,  the client has no
incentive to do the transaction.  $\tau_{b^{}}=0$ in this case is
the optimal stopping time.
 \vskip 5pt \noindent
 {\bf Case of  $0< S_{0}\leq b $}.\quad In this case
both the client and the bank have  incentives to do the business.
The bank does because there is dividend payment and so does the
client because the initial stock price is neither very high nor too
low. By Theorem \ref{main2}, the initial value is $f(S_{0})$. In
this case the bank  charges an amount $c=f(S_{0})-S_{0}+q$ from the
client for providing  the service. So the fair values of the
parameters $\gamma,q$ and $c$ must satisfy
\begin{eqnarray*}\label{value_determine}
S_{0}-q+c= (b^{}-q)(\frac{x}{b^{}})^{\lambda_{1}}
\end{eqnarray*}
and the optimal  stopping time is $\tau_{b^{}}$. \vskip 10pt
\noindent
 \setcounter{equation}{0}
\section{{\small {\bf  Examples }}}
\vskip 10pt \noindent In this section we will give two examples of
capped stock loan as follows.
\begin{EX}\label{E1}
Let the risk free rate $r=0.05$, the loan rate $\gamma=0.07$, the
volatility $\sigma=0.15$, the dividend $\delta=0.01$, the principal
$q=100$ and the cap $L=240$. Then $b=147.8< L$. We compute the
initial value $f(x)$ of capped stock loan  as in the following
Figure 1. The graph obviously shows that the cap greatly impacts on
the initial value when the stock price is very large. The cap
reduces the value of the stock loan and the client can acquires more
liquidity.
\begin{figure}[H]
  \includegraphics[height=9cm]{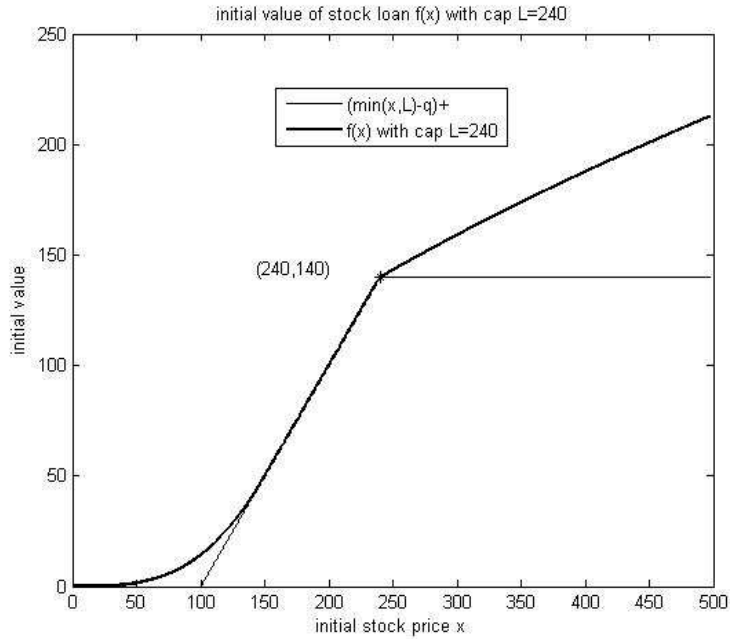}
  \caption{$\gamma=0.07,r=0.05,\sigma=0.15,\delta=0.01,
  q=100,cap=240$ }\label{fa}
\end{figure}
\end{EX}
\begin{EX}
Let $r$, $ \sigma$, $\sigma$, $\delta $ and $q$ be as the same as in
{\bf Example} \ref{E1},  $L=120$. Then $L < b=147.8$.  We compute
the initial value $f(x)$ of capped stock loan  as in the following
Figure 2. The graph explains  that  the initial value of caped stock
loan  greatly decreases  because the cap is less than $b^{}$. The
client will acquire much more liquidity compared to the uncapped
stock loan.
\begin{figure}[H]
  \includegraphics[height=9cm]{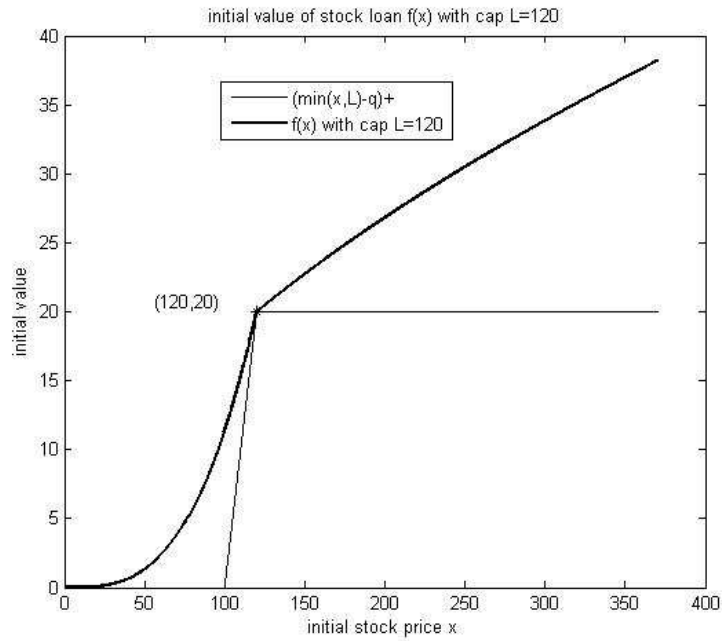}
  \caption{$\gamma=0.07,r=0.05,\sigma=0.15,\delta=0.01,q=100,cap=120$ }\label{fa}
\end{figure}
 \end{EX}
\vskip 15pt \noindent
 {\bf Acknowledgements.}  We
are very grateful to Professor Jianming Xia for his conversation
with us and providing  original paper of \cite{stock} for us. We
also express our deep thanks to Professor  Xun Yu Zhou for providing
power point files of his talk on stock loan at Peking University for
us. Special thanks also go to the participants of the seminar
stochastic analysis and finance  at Tsinghua University for their
feedbacks and useful conversations. This work is supported by
Project 10771114 of NSFC, Project 20060003001 of SRFDP, and SRF for
ROCS, SEM, and the Korea Foundation for Advanced Studies. We would
like to thank the institutions for the generous financial support.

\vskip 20pt \noindent \setcounter{equation}{0}


\begin{thebibliography}{99}
 \baselineskip17pt
\bibitem{Handbook} Borodin, A. N., Salminen, P., 2002.
 {\em Handbook of Brownian Motion-Facts and Formulae}.
  Probability and its Applications, Birkh$\ddot{a}$user Verlag, Basel, 2nd
  edition.
\bibitem{BRDE} Broadie, M., Detemple, J., 1995. {\em  American capped call
options on dividend-paying asset }. { The Review of Finance
Studies,} Vol. 8, No. 1, pp. 161-191.
\bibitem{Diffusion} Dayanik, S., Karatzas, I., 2003. {\em On the optimal stopping
problems for one-dimensional diffusions}.   Stochastic Process.
Appl., 107, no. 2, 173-212.
\bibitem{capped stock} Jiang, S., Liang, Z., Wu, W., 2008.  {\em Stock
loan with automatic termination clause}. Preprint(11, 2008).
 \bibitem{Brownian} Karatzas, I., Shreve,S. E., 1991.
  {\em Brownian motion and stochastic calculus}. Springer-Verlag, New York.
\bibitem{Methods} Karatzas, I.,  Shreve,S. E., 1998.
 {\em Methods of mathematical finance}. Springer-Verlag, New York.
\bibitem{Game} Kifer,Y., 2000. {\em Game Options}.
{ Finance and Stochastics}, 4:443-463.
\bibitem{Calculation} Kyprianou, A.E., 2004. {\em Some calculations for Israeli
options}.  Finance and Stochastics,  8:73-86.
\bibitem{s23}   Lions, P.-L., Sznitman, A.S.. 1984.  Stochastic differential
                 equations with reflecting boundary conditions. Comm.Pure Appl.
                 Math.{\bf 37}, 511-537.
\bibitem{McKean} McKean,H.P.JR., 1965. {\em A free-boundary problem for
the heat equation arising from a problem in mathematical economics}.
Industr. Manag. Rev., {\bf 6}, 32-39. Appendix to Samuelson(1965a).
\bibitem{Bernt and Sulem}  $\varnothing$ksendal,B. and Sulem,A.,
 2005. {\em Applied Stochastic Control of Jump Diffusions}. Springer.
\bibitem{Towards} Shiryaev,A.N.,   Kabanov,Y.M., Kramkov, D.O. and Melnikov,
A.V., 1994. {\em Towards the theory of options of both European and
American types II}. { Theory Prob.Appl}, 39,61-102.
\bibitem{stock} Xia, J.M.,  Zhou,X.Y.,
 2007.  {\em Stock loans}. { Mathematical Finance}, Vol.17, No.2, 307-317.
\end{thebibliography}
\end{document}